\newcommand\scalemath[2]{\scalebox{#1}{\mbox{\ensuremath{\displaystyle #2}}}}
\theoremstyle{definition}
\newtheorem{theorem}{Theorem}[section]
\date{}
\title{A Novel Catastrophic Condition for Periodically Time-varying Convolutional Encoders Based on Time-varying Equivalent Convolutional Encoders}
\author{Fan Jiang}
\begin{document}

\maketitle

\begin{abstract}
A convolutional encoder is said to be catastrophic if it maps an information sequence of infinite weight into a code sequence of finite weight. 
As a consequence of this mapping, a finite number of channel errors may cause an infinite number of information bit errors when decoding. 
This situation should be avoided.
A catastrophic condition to determine if a time-invariant convolutional encoder is catastrophic or not is stated in \cite{Massey:LSC}.
Palazzo developed this condition for periodically time-varying convolutional encoders in \cite{Palazzo:Analysis}.
Since Palazzo's condition is based on the state transition table of the constituent encoders, its complexity increases exponentially with the number of memory elements in the encoders. 
A novel catastrophic condition making use of time-varying equivalent convolutional encoders  is presented in this letter.
A technique to convert a catastrophic periodically time-varying convolutional encoder into a non-catastrophic one can also be developed based on these encoders.
Since they do not involve the state transitions of the convolutional encoder, the time complexity of these methods grows linearly with the encoder memory.
\end{abstract}

\section{Introduction}

As one of the two major types of error-correcting codes, convolutional codes are widely used in modern digital and wireless communications, e.g., IEEE 802.11.
Compared with block codes, convolutional codes have the advantages of easy encoder implementation, readiness for sequential and soft-decision decoding, etc.
A rate  $k/n$, where k and n are small positive integers with $k < n$, memory $m$ binary convolutional encoder is a linear causal finite-state sequential circuit consisting of $k$ shift-registers each with at most $m$ memory elements \cite{Forney:CCI}.
The encoder takes in an input sequence of  $k$-tuples of 0's and 1's and encode them into an output or codeword sequences of $n$-tuples by discrete-time convolution.
For any linear system, time-domain convolution may be converted into more convenient transform-domain polynomial multiplication.
As a result, rate $k/n$ convolutional encoders are often represented by $k\times n$ transfer function matrices ${\mathbf G}(D)$, where the indeterminate $D$ stands for "delay". 
For example, the matrix
\begin{equation}
{\mathbf G}(D) =
\begin{bmatrix}
1+D, & 1+D^2
\end{bmatrix}
\label{eq:cata_encoder}
\end{equation}
represents a rate $1/2$ convolutional encoder of memory 2.  
In fact, the polynomials in ${\mathbf G}(D)$ identify a feedforward realization of the encoder using memory elements, XOR gates and binary scalors \cite{Massey:LSC}.
For the encoder described by (\ref{eq:cata_encoder}), an all-one input sequence of infinite length which can be expressed as ${\mathbf I}(D) =1/(1+D)$ would generate a codeword sequence  
\[
{\mathbf C}(D)={\mathbf I}(D){\mathbf G}(D)
\] 
with only three 1's \cite{Massey:LSC}. 
What this entails is that, once this codeword sequence is transmitted over a noisy channel,  it is possible for only three channel errors to cause an infinite amount of bit errors in the input information sequence.
We call this kind of encoder catastrophic and catastrophic convolutional encoders should be avoided for obvious reasons. 
Massey et. al. showed that a convolutional encoder is non-catastrophic if and only if the greatest common divisor (GCD) of all the $n\choose k$ minors of order $k$ in ${\mathbf G}(D)$ is $D^l$ for some integer $l\geq 0$  \cite{Massey:LSC}.
Another way to tell if a convolutional encoder is catastrophic or not is to check its  state transition diagram.
The encoder is catastrophic if and only if there exists a loop, excluding the self-loop at the all-zero state, with all-zero codeword symbols on its state transition edges \cite{Lin:ECC}. 

A convolution encoder can be time-invariant to have a fixed ${\mathbf G}(D)$ or time-varying to have a ${\mathbf G}(D)$ that varies with time periodically.
Since ${\mathbf G}(D)$ is not fixed, Palazzo used the state transition diagram to determine if a periodically time-varying convolutional encoder is catastrophic or not.  
Since the number of states in this diagram increases exponentially with memory $m$, this method may become impractical when $m$ is large. 
In this letter, we develop a novel catastrophic condition to decide if a periodically time-varying convolutional encoder is catastrophic or not.
This method makes use of the {\em time-varying equivalent convolutional encoder} (TVECE), which is defined to be a time-invariant convolutional encoder that is equivalent to the periodically time-varying convolutional encoder.
Since it does not depend on the state transition diagram, the complexity of this catastrophic condition increases much slower with encoder memory.

In Section I, we first prove that every periodically time-varying convolutional encoders is equivalent to a TVECE.
Based on the TVECEs, in Section II, a catastrophic condition for periodically time-varying convolutional encoder and a method to convert a catastrophic encoder into a noncatastrophic one are derived. 
Conclusions and possible future directions for this work are presented in Section III. 

\section{Time-Varying Equivalent Convolutional Encoders}

As defined in \cite{Forney:CCI} \cite{Piret:CC}, two convolutional encoders $E$ and $E'$ of the same rate $k/n$ are equivalent if they generate the same code. 
According to this definition, a convolutional codes can have many different convolutional encoders.
These equivalent encoders generate the same code, but the mapping from the input sequences to the codeword sequences defined by them may be different.
We are going to show that every period $p$,  rate $k/n$, memory $m$, denoted as $(p, n, k, m)$,  periodically time-varying convolutional encoder is equivalent to a rate $\frac{kp}{np}$, memory $\lceil \frac{m}{p}\rceil$ time-invariant convolutional encoder in the strict sense that they both generate the same code with the same mapping.
The fact that these two convolutional encoders are equivalent in the strict sense is critical in developing the new catastrophic condition.

According to \cite{Palazzo:Analysis}, a $(p, n, k, m)$ periodically time-varying convolutional encoder can be described in two ways: via the serial and parallel description.
In the serial description, it is seen as a linear sequential circuit with time-varying connections from the memory elements of the $k$ shift-registers to the $n$ output terminals. 
At each encoding epoch, these connections may be represented by a $k\times n$  transfer function matrix of a certain time-invariant encoder called the constituent encoder. 
For a period of $p$, we have $p$ such constituent encoders.
However, in the parallel description, the input sequence of $k$-tuples is fed into all $p$ constituent encoders simultaneously, i.e., all constituent encoders encode the same input $k$-tuples at every encoding epoch. 
The time-varying nature of the encoder is manifested by periodically taking one output $n$-tuple from one of these $p$ constituent encoders as the current codeword for the encoding epoch and all the rest $p-1$ $n$-tuples can be seen as being punctured. 
We are going to use the parallel description to prove theorem 1. 

\begin{theorem} Let $E$ be a $(p, n, k, m)$ periodically time-varying convolutional encoder. 
Then $E$ is equivalent to a  rate $\frac{kp}{np}$, memory $\lceil \frac{m}{p}\rceil$ time-invariant convolutional encoder $E'$ in the sense that the same input sequence to both $E$ and $E'$ generate the same codeword sequences. 
\end{theorem}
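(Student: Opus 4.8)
The plan is to realize $E'$ by \emph{blocking} the input and output streams into super-symbols of length $p$, which is equivalent to taking a polyphase decomposition with respect to the period. Writing the input as a formal power series $\mathbf{U}(D)$ of $k$-tuples, I would split it into its $p$ phases
\[
\mathbf{U}(D) = \sum_{j=0}^{p-1} D^{j}\,\mathbf{U}_{j}(D^{p}), \qquad \mathbf{U}_{j}(\Delta) = \sum_{s\ge 0}\mathbf{u}_{sp+j}\,\Delta^{s},
\]
and collect these phases into a single $kp$-tuple stream $\mathbf{U}'(\Delta) = [\,\mathbf{U}_{0}(\Delta),\dots,\mathbf{U}_{p-1}(\Delta)\,]$ in the new delay variable $\Delta = D^{p}$; the output stream of $n$-tuples is blocked into an $np$-tuple stream $\mathbf{C}'(\Delta)$ in the same way. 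The goal is to produce a single time-invariant $kp\times np$ transfer matrix $\mathbf{G}'(\Delta)$ with $\mathbf{C}'(\Delta)=\mathbf{U}'(\Delta)\,\mathbf{G}'(\Delta)$.

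Using the parallel description, the codeword emitted at epoch $sp+i$ is the coefficient of $D^{sp+i}$ in $\mathbf{U}(D)\,G_{i}(D)$, where $G_{i}(D)$ is the phase-$i$ constituent encoder with taps $g^{(i)}_{\tau}$. First I would substitute the polyphase expansions of $\mathbf{U}(D)$ and of each $G_{i}(D)$ into this product and extract, for each output phase $i$, the super-time-$s$ coefficient. Matching the index arithmetic $sp+i-\tau = s'p+j$ (equivalently the phase congruence $j+l\equiv i \pmod p$ on the polyphase indices) shows that the contribution of input phase $j$ to output phase $i$ is governed only by the difference $s-s'$ of the super-time indices; reading off the resulting coefficients of $\Delta$ gives the $(j,i)$ block of $\mathbf{G}'(\Delta)$ as a polynomial in $\Delta$ built from the $g^{(i)}_{\tau}$. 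Because this dependence is on $s-s'$ alone, $\mathbf{G}'(\Delta)$ carries no residual dependence on $s$, which is exactly the statement that $E'$ is time-invariant; the fact that the phase $i=t\bmod p$ is constant across each block of length $p$ is what makes this work, and is where the hypothesis that the period is exactly $p$ enters.

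To pin down the memory I would bound the degree in $\Delta$ of each block. When the input phase does not exceed the output phase ($j\le i$) the relevant tap index is $\delta p+(i-j)$ with $\delta\ge 0$, giving degree at most $\lfloor m/p\rfloor$. When $j>i$ the smallest admissible tap index forces a \emph{carry} into the next super-symbol, which contributes an extra factor of $\Delta$ and pushes the degree up to $1+\lfloor (m-1)/p\rfloor = \lceil m/p\rceil$. Taking the maximum over all blocks yields memory $\lceil m/p\rceil$, and explains why the ceiling, rather than $\lfloor m/p\rfloor$, is the correct value. Finally, since the polyphase/blocking map is a bijective re-indexing of the same bits, the blocked relation $\mathbf{C}'(\Delta)=\mathbf{U}'(\Delta)\,\mathbf{G}'(\Delta)$ is literally the original input-to-codeword map rewritten, so $E'$ reproduces the same codeword for the same input, i.e.\ the two encoders are equivalent in the strict sense.

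The main obstacle I anticipate is the bookkeeping in the second and third steps: keeping the phase congruence straight while separating the within-block contributions from the carry terms, and verifying that the carry terms are exactly the ones responsible for the ceiling in the memory count. Everything else — linearity, causality, and the bijectivity of the blocking map — is routine once the transfer matrix $\mathbf{G}'(\Delta)$ has been written down correctly.
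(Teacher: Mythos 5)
Your proposal is correct and is essentially the paper's argument: both proofs use the parallel description and block the encoder by its period $p$ to exhibit a rate $\frac{kp}{np}$ time-invariant encoder, and both obtain the memory $\lceil m/p\rceil$ by the same counting (your degree bound in $\Delta$, including the carry terms for input phase $j>$ output phase $i$, is exactly the paper's observation that each composite row spans $m+p$ columns split into blocks of $p$). The only difference is presentational: you work in the transform domain via the polyphase decomposition $\mathbf{U}(D)=\sum_j D^j\mathbf{U}_j(D^p)$, whereas the paper writes down the semi-infinite scalar generator matrix $\mathbf{G}$ and observes its block-Toeplitz structure with $kp\times np$ blocks.
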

 \begin{proof}
For the $i^{th}$ rate $k/n$ time-invariant constituent encoder, the discrete-time convolutional may be represented by a semi-infinite generator matrix, ${\mathbf G}_i$, as 
\[
{\mathbf G}_i=\left\lceil 
\begin{tabular}{ccccccccc}
$G_{i}^{0}$    & $G_i^1$  & $\cdots$ 	& $G_i^m$ & ${\mathbf 0}$ & ${\mathbf 0}$   & ${\mathbf 0}$ & $\cdots$\\
${\mathbf 0}$ & $G_i^0$  & $G_i^1$  	& $\cdots$     & $G_i^m$ 	   & ${\mathbf 0}$   & ${\mathbf 0}$  & $\cdots$\\
${\mathbf 0}$ & ${\mathbf 0}$ & $G_i^0$ 	 & $G_i^1$  & $\cdots$  & $G_i^m$ 	 & ${\mathbf 0}$   & $\cdots$\\
 	         &	   	&		&$\cdots$ & 		& 		& 
\end{tabular}
\right\rceil,
\]
where each $G_i^j, j = 0,1,\cdots,m$ is a $k\times n$ matrix of binary scalors that indicates the connections from the memory elements of the $k$ shift-registers  to the $n$ outputs.   
Based on the parallel description, we can ignore the constituent encoders whose outputs are punctured and 
combine all $p$ matrices like ${\mathbf G}_i$ into one generator matrix, ${\mathbf G}$, as following
\[
\scalemath{0.75}
{
{\mathbf G}=\left\lceil 
\begin{tabular}{cccc|ccccccccc}
$G_1^0$    & $G_2^1$  	& $\cdots$  &  $\cdots$ &  $\cdots$	& $G_{\mu_p(m)}^m$ & ${\mathbf 0}$ & ${\mathbf 0}$ & ${\mathbf 0}$ & ${\mathbf 0}$ & $\cdots$   &  ${\mathbf 0}$  &  $\cdots$\\
${\mathbf 0}$    & $ G_2^0$  	& $ G_3^1 $ &  $\cdots$ &  $\cdots$	& $G_{\mu_p(m)}^{m-1}$ & $G_{\mu_p(m+1)}^{m}$ & ${\mathbf 0}$ & ${\mathbf 0}$  & ${\mathbf 0}$ & $\cdots$   & ${\mathbf 0}$   & $\cdots$ \\
${\mathbf 0}$ & ${\mathbf 0}$ & $\ddots$  	&  $\cdots$ 	& $\cdots$  & $\ddots$   	& $\ddots$	&  $\ddots$   & ${\mathbf 0}$ & ${\mathbf 0}$  & $\cdots$  & $ {\mathbf 0}$   & $\cdots$ \\
${\mathbf 0}$ & $\cdots$ 	& ${\mathbf 0}$ 	& $G_p^0$   & $\cdots$  & $\cdots$      & $\cdots$    & $\cdots$    & $\cdots$          & $G_{\mu_p(m+p-1)}^m$        &	${\mathbf 0}$   & ${\mathbf 0}$   & $\cdots$\\\hline
${\mathbf 0}$ & $\cdots$ 	& $\cdots$ 	&  ${\mathbf 0}$ & $G_1^0$ & $G_2^1$  & $\cdots$ & $\cdots$ & $\cdots$ & $\cdots$ & $G_{\mu_p(m)}^m$ & ${\mathbf 0}$ & $\cdots$ \\
 &  	& 	&   & & $\ddots$ & $\cdots$ & $\cdots$ & $\cdots$  & $\cdots$  & $\cdots$ & $ \ddots$ & $\cdots$ \\

\end{tabular}
\right\rceil
}
\]
where $\mu_p(x) = (x \mod p) + 1$. 
Examining ${\mathbf G}$, we see that, if the first $p$ rows are combined into one composite row, the remainder of ${\mathbf G}$ consists of shifts of this composite row to the right by $lp, l=1,2,\cdots,$ columns each time.
The matrix ${\mathbf G}$ can then be segmented into blocks of $p\times p$ submatrices as the one shown in the upper-left corner of ${\mathbf G}$.
Therefore, ${\mathbf G}$ is in fact the generator matrix of a time-invariant convolutional encoder.
Since each $G_i^j$ in the $p\times p$ submatrix is a $k\times n$ matrix of binary entries, the dimension of the $p\times p$ submatrix is $kp\times np$.
Thus, the rate of the time-invariant encoder ${\mathbf G}$ represents is $\frac{kp}{np}$.

Let $m^*$ be the memory of this time-invariant encoder.
To find $m^*$, note that there are a total of $m+p$ columns in each composite row of ${\mathbf G}$ when the all-zero part is excluded.  
Because each composite row like this is segmented into blocks of $p\times p$ submatrices, the memory of the time-invariant encoder is therefore  
\begin{eqnarray*}
m^* & = & \left\lceil\frac{m+p}{p}\right\rceil-1\\
& = & \left\lceil\frac{m}{p}\right\rceil.
\end{eqnarray*}
\end{proof}

\section{Catastrophic Condition for Periodically Time-varying Convolutional Encoders Based on TVECEs}

Now that every periodically time-varying convolutional encoder is equivalent in the strict sense to a time-invariant convolutional encoder, the converse is not true.
We call a time-invariant convolutional encoder that is equivalent in the strict sense to a periodically time-varying convolutional encoder a {\em time-varying equivalent convolutional encoder} (TVECE).
Utilizing TVECEs, catastrophic periodically time-varying convolutional encoders can be identified and converted to be non-catastrophic.
These results are summarized in the following theorem:
 
\begin{theorem}
Let $E$ be a period $p$, rate $k/n$ periodically time-varying convolutional encoder.
Then $E$ is catastrophic if and only if  its time-varying equivalent convolutional encoder $E'$  is catastrophic.
Moreover, if $E$ is catastrophic, it can be converted to be non-catastrophic by dividing the transfer function matrices of all its constituent encoders by $f(D^p)$, 
where $f(D)$ is the GCD of all minors of order $kp$ in the transfer function matrix of $E'$.
\end{theorem}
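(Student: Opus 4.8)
The plan is to prove the two assertions in turn, leaning on the strict-sense equivalence of Theorem~1 for the first and on the feedforward right-inverse characterization of non-catastrophic encoders (equivalent to the minor criterion of \cite{Massey:LSC}) for the second. For the ``if and only if'', I would first observe that Theorem~1 supplies a \emph{weight-preserving} correspondence between $E$ and $E'$: grouping the input $k$-tuples of $E$ into $kp$-tuples and the output $n$-tuples into $np$-tuples is a bijection on sequences that leaves the Hamming weight unchanged, and by Theorem~1 it intertwines the two encoding maps. Hence an infinite-weight input of $E$ producing a finite-weight codeword corresponds exactly to an infinite-weight input of $E'$ producing a finite-weight codeword, and conversely. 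Both directions are then immediate, so $E$ is catastrophic iff $E'$ is.

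For the conversion, let $\tilde E$ denote $E$ with every constituent transfer matrix divided by $f(D^p)$. First I would check that $\tilde E$ generates the same code as $E$, so that the operation is a legitimate conversion: since $f(D^p)$ is a unit in the field of rational functions and, being supported only on delays that are multiples of $p$, commutes with the period-$p$ operator of $E$, the output of $\tilde E$ on input $\mathbf I(D)$ equals the output of $E$ on input $\mathbf I(D)/f(D^p)$; as $\mathbf I(D)\mapsto \mathbf I(D)/f(D^p)$ is a bijection of the input space, the two images coincide. Next I would apply the equivalence just proved to $\tilde E$: it suffices to show its TVECE $\tilde E'$ is non-catastrophic. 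Because blocking by $p$ sends the phase-preserving scalar $f(D^p)$ to the scalar $f(D)$ in the block variable, the TVECE of $\tilde E$ is $\tilde{\mathbf G}'(D)=\mathbf G'(D)/f(D)$.

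The crux is then to produce a polynomial right inverse for $\tilde{\mathbf G}'$. Writing the Smith form of the polynomial matrix $\mathbf G'(D)$ as
\[
\mathbf G'(D)=A(D)\,[\operatorname{diag}(\gamma_1,\dots,\gamma_{kp})\mid \mathbf 0]\,B(D),
\]
with $A,B$ unimodular and invariant factors $\gamma_1\mid\cdots\mid\gamma_{kp}$, the $kp$-th determinantal divisor $\gamma_1\cdots\gamma_{kp}$ equals exactly $f(D)$, the GCD of the order-$kp$ minors. Since each $\gamma_i$ divides $f$, the matrix
\[
\mathbf R(D)=B(D)^{-1}\begin{bmatrix}\operatorname{diag}(f/\gamma_1,\dots,f/\gamma_{kp})\\ \mathbf 0\end{bmatrix}A(D)^{-1}
\]
is polynomial and satisfies $\mathbf G'(D)\mathbf R(D)=f(D)\,I_{kp}$, whence $\tilde{\mathbf G}'(D)\mathbf R(D)=I_{kp}$. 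A polynomial right inverse forces non-catastrophe: if a codeword $\mathbf c(D)=\mathbf u(D)\tilde{\mathbf G}'(D)$ has finite support, then $\mathbf u(D)=\mathbf c(D)\mathbf R(D)$ also has finite support, so no infinite-weight input can yield a finite-weight codeword. Thus $\tilde E'$, and therefore $\tilde E$, is non-catastrophic.

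I expect the main obstacle to be the block-domain bookkeeping of the previous paragraph, namely justifying that dividing the constituent matrices by $f(D^p)$ descends to a clean scalar division of $\mathbf G'(D)$ by $f(D)$ rather than to some pseudo-circulant matrix factor. This rests on $f(D^p)$ being phase-preserving, so that it commutes with the periodic encoder and blocks to $f(D)I$. It is worth emphasizing that a naive computation dividing the whole block matrix by $f$ and tracking the order-$kp$ minors would wrongly suggest a spurious factor $f^{kp}$; the right-inverse argument is precisely what sidesteps that pitfall and keeps the reasoning valid even though $\tilde{\mathbf G}'$ is only rational rather than polynomial.
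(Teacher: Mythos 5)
Your proof is correct, and on the first assertion and on the polyphase bookkeeping it runs parallel to the paper's own argument: the paper likewise derives the ``if and only if'' directly from the strict-sense equivalence of Theorem~1 (the two encoders realize the same input--output map up to the weight-preserving blocking/serialization bijection), and it establishes the correspondence between dividing by $f(D)$ in the block domain and dividing by $f(D^p)$ in the serial domain through exactly the identity ${\mathbf I}_E(D)={\mathbf I}_1(D^p)+D{\mathbf I}_2(D^p)+\cdots+D^{p-1}{\mathbf I}_p(D^p)$, which is the same fact you express by saying that the scalar $f(D^p)$ commutes with the period-$p$ operator and blocks to $f(D)I$. Where you genuinely diverge is the last step. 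The paper works entirely on the input side and simply cites \cite{Forney:CCI} for the claim that dividing ${\mathbf G}_{E'}(D)$ by the GCD $f(D)$ of its order-$kp$ minors produces a non-catastrophic encoder; you instead prove that claim from scratch via the Smith form: $f=\gamma_1\cdots\gamma_{kp}$ is the $kp$-th determinantal divisor, each invariant factor divides $f$, and the resulting polynomial matrix ${\mathbf R}(D)$ with ${\mathbf G}'(D){\mathbf R}(D)=f(D)I$ is a polynomial right inverse of ${\mathbf G}'(D)/f(D)$, which immediately forbids an infinite-weight input from producing a finite-weight codeword. This buys two things the paper leaves implicit: a self-contained argument that remains valid even though ${\mathbf G}'(D)/f(D)$ is only rational, so the minor criterion of \cite{Massey:LSC} cannot be applied to it verbatim (the spurious $f^{kp}$ pitfall you flag is real), and an explicit verification that the modified encoder still generates the same code, which is needed for ``converted'' to be meaningful. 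The one point you share with the paper rather than improve on is realizability: if $f(0)=0$ then $1/f(D^p)$ is not causal and the quotient encoder is not feedforward-realizable as stated, so strictly one should only divide out the part of $f$ coprime to $D$; this does not affect the catastrophicity argument itself.
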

\begin{proof}
The first part of the theorem is straightforward.
Since a periodically time-varying convolutional encoder is equivalent to its time-varying equivalent convolutional encoder in the strict sense, 
their mapping from the input sequences to the output sequences are identical. 
Therefore, if the TVECE is catastrophic and maps an input of infinite weight to a codeword of finite weight, the periodically time-varying convolution ecnoder does the same and thus is also catastrophic and vice versa.

Let ${\mathbf G}_{E'}(D)$ be the $kp\times kn$ transfer function matrix of $E'$.
Since $E'$ is a time-invariant convolutional encoder, dividing ${\mathbf G}_{E'}(D)$ by $f(D)$ converts it into a non-catastrophic encoder \cite{Forney:CCI}. 
To prove the second half of the theorem, we need to show that this is equivalent to dividing all constituent encoders of $E$ by $f(D^p)$.

Since the encoding process  can be described by ${\mathbf C}(D)={\mathbf I}(D){\mathbf G}(D)$, dividing ${\mathbf I}(D)$ by $f(D)$ has the same effect as dividing ${\mathbf G}(D)$ by $f(D)$ on ${\mathbf C}(D)$.
Therefore, we can show equivalence of encoders just by examining ${\mathbf I}(D)$, the input sequences.
Notice that there are $p$ input sequences of $k$-tuples to $E'$. 
Let's denote these $p$ sequences as ${\mathbf I}_1(D),\cdots, {\mathbf I}_p(D)$.
According to the parallel description of periodically time-varying convolutional encoders, the single input sequence of $k$-tuples to $E$, denoted as ${\mathbf I}_E(D)$, is simply the serialization of  ${\mathbf I}_1(D),\cdots, {\mathbf I}_p(D)$.
This allows us to write \cite{Lin:ECC}
\begin{align}
{\mathbf I}_E(D)= {\mathbf I}_1(D^p) + D{\mathbf I}_2(D^p)+\cdots+D^{p-1}{\mathbf I}_p(D^p)\nonumber.
\end{align}

If we divide all $p$ input sequences to $E'$ by $f(D)$, we obtain sequences as $\frac{{\mathbf I}_1(D)}{f(D)},\cdots,\frac{{\mathbf I}_p(D)}{f(D)}$. 
Serialization of these yields the corresponding input sequence to $E$, denoted as $I'_E(D)$, as
\begin{eqnarray*}
I'_E(D) & = & \frac{{\mathbf I}_1(D^p)}{f(D^p)} + D\frac{{\mathbf I}_2(D^p)}{f(D^p)}+\cdots+D^{p-1}\frac{{\mathbf I}_p(D^p)}{f(D^p)} \\
& = & \frac{{\mathbf I}_1(D^p)+D{\mathbf I}_2(D^p)+\cdots+D^{p-1}{\mathbf I}_p(D^p)}{f(D^p)}\\
& = & \frac{{\mathbf I}_E(D)}{f(D^p)}
\end{eqnarray*}
This shows that dividing ${\mathbf G}_{E'}(D)$ by $f(D)$ is equivalent to dividing ${\mathbf G}_{E}(D)$ by $f(D^p)$ in the sense that they generate the same codewords with the same mapping between input and output.
Hence, $E$ can be converted to be non-catastrophic by dividing the transfer function matrices of all its constituent encoders by $f(D^p)$.
\end{proof}

Based on theorem 1 and 2, with the help of the TVECEs, not only can we develop a catastrophic condition for periodically time-varying convolutional encoders, 
but we can convert them to be non-catastrophic as well.  
The computational complexity of both the catastrophic condition and the conversion method are dominated by computing the GCD of polynomials with binary coefficients.
Finding this GCD is similar to finding the GCD of integers.
We can use the well-known recursive Euclid's algorithm \cite{Cormen:Algo}, whose time complexity is linear with the order $m$ of the polynomials.
Therefore, our new catastrophic condition and conversion method significantly save computational time compared with Palazzo's technique in \cite{Palazzo:Analysis}. 

\section{Conclusions}

In this letter, we showed that every periodically time-varying convolutional encoder is equivalent to a TVECE in the sense that not only do they generate the same code, they have the same mapping from the input sequences to the codeword sequences as well.
Based on this equivalence, we can make use of the TVECEs to identify catastrophic periodically time-varying convolutional encoders.
Moreover, using the GCD of all the minors of order $kp$ in ${\mathbf G}_{E'}(D)$ of the TVECE, we can convert a catastrophic periodically time-varying convolutional encoder to be non-catastrophic.
TVECEs provide us with a convenient alternative way to analyze periodically time-varying convolutional encoders. 
Future work will explore algebraic structure and properties of these encoders via TVECEs further. 

\vskip5pt


\begin{thebibliography}{}

\bibitem{Massey:LSC}
James L. Massey and Michael K. Sain,
``Inverse of Linear Sequential Circuits'',
{\em IEEE Trans. Comput.},
Vol. C-17, page(s) 330-337, April 1968.

\bibitem{Palazzo:Analysis}
R. Palazzo,
``Analysis of Periodic Linear and Nonlinear Trellis Codes'',
{\em Ph. D. Dissertation,},
University of Los Angeles, Los Angeles, USA, 
1984.

\bibitem{Forney:CCI}
G. D. Forney, 
``Convolutional Codes I: Algebraic Structure'',
{\em IEEE Trans. on Info. Theory},
Vol. IT-16, No. 6,  page(s) 720-738, November 1970.
 
\bibitem{Lin:ECC}
Shu Lin and D. J. Costello, Jr.,
{\em Error-control coding: fundamentals and applications},
Pearson, 2004.

\bibitem{Piret:CC}
Ph. Piret,
{\em Convolutional Codes: An Algebraic Approach},
M.I.T. Press, Cambridge, MA, USA, 
1988.

\bibitem{Cormen:Algo}
 T. H. Cormen, C. E. Leiserson, R. Rivest and C. Stein,
{\em Introduction to Algorithms},
M.I.T. Press, Cambridge, MA, USA, 
2009.

\end{thebibliography}
\end{document}